\newtheorem{proposition}{Proposition}
\def\BibTeX{{\rm B\kern-.05em{\sc i\kern-.025em b}\kern-.08em
    T\kern-.1667em\lower.7ex\hbox{E}\kern-.125emX}}
\begin{document}
\title{Channel estimation for double IRS assisted broadband single-user SISO communication
\thanks{Vishnu Karthikeya Gorty is with the Department of Electrical Engineering, Indian Institute of Technology Delhi, INDIA-110016 (E-mail: vishnukgorty@ee.iitd.ac.in).}
}
\author{Vishnu Karthikeya Gorty}
\maketitle
\begin{abstract}
In this paper, two Intelligent reflecting surfaces (double IRS) assisted single-user single input single output (SISO) communication system is considered. The cascaded channels (mobile user (MU)$\rightarrow$IRS-1$\rightarrow$base station (BS), MU$\rightarrow$IRS-2$\rightarrow$BS and MU$\rightarrow$IRS-1$\rightarrow$IRS-2$\rightarrow$BS channels) are estimated under Bayesian setting. Here, the goal is to evaluate the performance of the estimator in case of MU$\rightarrow$IRS-1$\rightarrow$BS and MU$\rightarrow$IRS-2$\rightarrow$BS channel links using Bayesian Cram{\'e}r-Rao lower bound (CRLB). Without the knowledge of closed form pdf of inner product of circularly symmetric complex Gaussian (CSCG) random vectors, we cannot obtain the fisher information. Hence, by numerical computation we obtain the Bayesian CRLB. In the simulation results, we show that we can approximate the pdf of the inner product of CSCG random vectors by a Rayleigh distribution by increasing the number of elements on the IRS, which is analogous to Central Limit Theorem (CLT). Also, the results convey that the mean squared error (MSE) almost matches with the Bayesian CRLB.
\end{abstract}

\begin{IEEEkeywords}
Intelligent Reflecting Surface (IRS), Channel Estimation (CE)
\end{IEEEkeywords}

\section{Introduction}
Intelligent reflecting surface (IRS), a low cost and low energy consuming technology, is made up of a planar array of passive elements which can reflect, refract, attenuate and induce a desired phase shift in the impinging signals which can help in constructive or destructive interference of the impinging signals. To utilize the performance gains offered by this technology, there is a need to design the passive beamforming coefficients at the IRS and active beamforming coefficents at the BS. This poses a challenge to obtain the accurate channel estimates as the IRS do not have any RF chains to transmit the pilots. Hence, in this paper, we takeup the problem of cascaded channel estimation in case of double IRS assisted broadband single-user system. Here, we compute the linear minimum mean squared error (LMMSE) estimate of the cascaded channels. To measure the performance of the estimator (in case of MU$\rightarrow$IRS-1$\rightarrow$BS and MU$\rightarrow$IRS-2$\rightarrow$BS channel links), we need to compute the Bayesian CRLB. This leads to the problem of computing the fisher information, which is difficult because we do not know the closed form pdf of the inner product of CSCG random vectors. Hence, by resorting to numerical computation, we show that this pdf can be approximated by a Rayleigh distribution.
\subsection{Related work}
In the recent years, there have been a significant research work related to IRS channel estimation. These works can be grouped into two categories: CE for single-IRS assisted communication and CE for double-IRS assisted communication.

\paragraph{CE for single-IRS assisted communication}: In \cite{b5}, a narrowband single user multiple input single output (MISO) system is considered. With a time division duplex (TDD) system assumption, the MMSE estimate of the direct channel, BS$\rightarrow$MU, is computed by switching OFF the IRS. Next, the cascaded channel associated with each IRS element is estimated.
In \cite{b6}, an IRS assisted multi-user communication is considered wherein a three stage channel estimation protocol is proposed to estimate the direct channel and the cascaded channels of all the users. The idea is to estimate the channels of all other users under consideration based on the channel knowledge of a reference user.
In \cite{b7}, a semi-passive IRS is considered to estimate the BS$\rightarrow$IRS and MU$\rightarrow$IRS channel links using the orthogonal matching pursuit (OMP) algorithm.
In \cite{b8}, a certain number of IRS elements are grouped together, where each group is called a sub-surface. Based on this grouping method, in \cite{b9}, the cascaded channel is estimated in a orthogonal frequency division multiplexing (OFDM) system.

\paragraph{CE for double-IRS assisted communication}: There may arise a scenario where the signal transmitted from the MU reaches the BS by reflection from two IRS's. So, \cite{b10} addresses the problem of channel estimation in case of double-IRS. Here, it is assumed that the MU$\rightarrow$IRS-2 and IRS-1$\rightarrow$BS channels are assumed to be blocked and the cascaded channel, MU$\rightarrow$IRS-1$\rightarrow$IRS-2$\rightarrow$BS is estimated. In \cite{b11}, a double-IRS assisted multi-user multiple input single output (MIMO) case is considered where only the direct channel, MU$\rightarrow$BS is blocked unlike in \cite{b10}. Here, all the cascaded channels are estimated with an effort to lower the pilot overhead.

\subsection{Contributions}
From the above discussion, we can say that there have been a lot of works that address the problem of channel estimation for IRS assisted communication. However, to the best of our knowledge, these works did not consider the computation of CRLB to evaluate the performance of the estimator. Hence, we make the following contributions in this paper:
\begin{itemize}

\item The performance of the estimator (in case of MU$\rightarrow$IRS-1$\rightarrow$BS and MU$\rightarrow$IRS-2$\rightarrow$BS channel links) is evaluated using an alternative tool, Bayesian CRLB, unlike the benchmark schemes used in the literature.

\item To compute the CRLB, we need the fisher information which in turn requires the closed form expression of the pdf of inner product of CSCG random vectors. In the simulation results, we show that this problem can be solved numerically by generating a large number of samples from the inner product and thereby approximating the pdf of the inner product by a Rayleigh distribution.
\end{itemize}

\subsection{Notations}A matrix or a vector is denoted by $\mathbf{A}$. The conjugate transpose and transpose of $\mathbf{A}$ is denoted by $\mathbf{A}^H$ and $\mathbf{A}^T$ respectively.
Let $\mathbb{E}\{\cdot\}$ and $Var\{\cdot\}$ denote the statistical expectation and variance respectively.
\subsection{Organization of the paper}
In Section \ref{system model}, the assumptions, channel statistics for Bayesian CE, transmission scheme and the signal model are discussed. In Section \ref{CE, MSE and CRLB}, the calculation of channel estimates, MSE and CRLB in case of Bayesian setting is discussed. The simulation results are presented in Section \ref{sim results}. Section \ref{conclusion} concludes the paper.

\section{System model}\label{system model}
We consider a double IRS assisted broadband communication between a MU and BS as shown in Figure \ref{fig1}. We assume that the MU and BS are equipped with a single antenna. Also, we assume that the direct path between the BS and the user is blocked due to a building or any other obstacle. Further, we assume that there is no LoS between any entities, that is, the BS, IRS's and the MU. Let $N$ denote the number of sub-carriers in an OFDM symbol and the number of pilots are equal to the number of sub-carriers in an OFDM symbol. The number of reflecting elements on IRS-1 and IRS-2 be $N_1$ and $N_2$ respectively.
\begin{figure}[!htb]
    \centering
    \includegraphics[height=6cm,width = 8cm,]{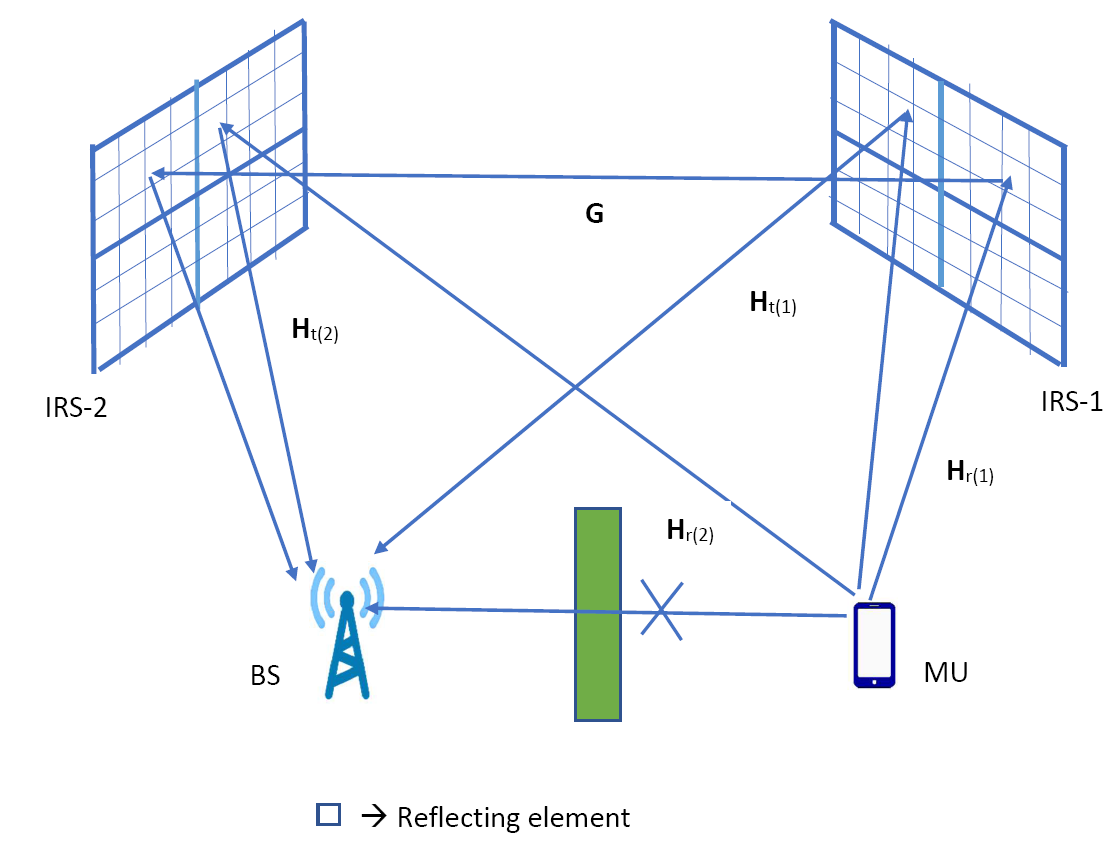}
    \caption{Schematic diagram of double IRS assisted uplink communication}
    \label{fig1}
\end{figure}

Let all the channels between the MU and BS be $L$-tap channels. After padding $\{N-L\}$ number of zeros in the channel impulse response (CIR) of each channel, the channel frequency response (CFR) of each channel is defined as follows. Let $\mathbf{H}_{r(1),k} \in \mathbb{C}^{N_1\times 1}$ and $\mathbf{H}_{r(2),k} \in \mathbb{C}^{N_2 \times 1}$ denote the CFR between the MU to IRS-1 and IRS-2 respectively over the $k^{th}$ sub-carrier in an OFDM symbol. Similarly, let $\mathbf{H}_{t(1),k} \in \mathbb{C}^{N_1\times 1 }$ and $\mathbf{H}_{t(2),k}  \in \mathbb{C}^{N_2\times 1}$ denote the channels, IRS-1$\rightarrow$BS and IRS-2$\rightarrow$BS respectively over the $k^{th}$ sub-carrier in an OFDM symbol. The CFR of IRS-1$\rightarrow$IRS-2 channel is denoted by $\mathbf{G}_{k}\in \mathbb{C}^{N_2\times N_1}$ over a sub-carrier in an OFDM symbol. Let $H_{t(i),k,n_i}, H_{r(i),k,n_i}$ and $G_{k,n_1,n_2}$ denote the elements in $\mathbf{H}_{t(i),k}, \mathbf{H}_{r(i),k}$ and $\mathbf{G}_k$ respectively, where $i\in\{1,2\},n_1\in\{1,2,...,N_1\}, n_2\in\{1,2,..., N_2\}$. All the channels are assumed to be following a quasi-static block fading channel model and the system is assumed to be operating in TDD mode.

Let $V_{(m),k} \sim \mathcal{CN}(0,\sigma^2)$ denote the frequency domain noise sample over the $k^{th}$ sub-carrier and they are independent and identically distributed (i.i.d) ,where $m \in \{1,2,3\}$ and $k \in \{1,2,..., N\}$. Let $J_{(1),k} = \mathbf{H}_{t(1),k}^T \mathbf{H}_{r(1),k}$ denote the cascaded channel associated with IRS-1. Similarly, let $J_{(2),k} = \mathbf{H}_{t(2),k}^T \mathbf{H}_{r(2),k}$ and $J_{(3),k} = \mathbf{H}_{t(2),k}^T\mathbf{G}_{k} \mathbf{H}_{r(1),k}$ denote the cascaded channels MU$\rightarrow$IRS-2$\rightarrow$BS and MU$\rightarrow$IRS-1$\rightarrow$IRS-2$\rightarrow$BS over the $k^{th}$ sub-carrier respectively.
Suppose that $H_{t(1),k,n_1}\sim \mathcal{CN}(0,\sigma_2^2)$, $H_{r(1),k,n_1}\sim \mathcal{CN}(0,\sigma_1^2)$, $H_{r(2),k,n_2}\sim \mathcal{CN}(0,\sigma_3^2)$, $H_{t(2),k,n_2}\sim \mathcal{CN}(0,\sigma_4^2)$  and $G_{k,n_1,n_2}\sim\mathcal{CN}(0,\sigma_5^2)$. Then $\mathbb{E}\{J_{(1),k}\} = \mathbb{E}\{J_{(2),k}\} = \mathbb{E}\{J_{(3),k}\} = 0 $ and $Var\{J_{(1),k}\} = N_1\sigma_1^2\sigma_2^2, Var\{J_{(2),k}\} = N_2\sigma_3^2\sigma_4^2 $ and $Var\{J_{(3),k}\}= N_2N_1\sigma_1^2\sigma_4^2\sigma_5^2 $.

The reflection coefficient of an IRS element is given by, $$\phi_n = \beta_n e^{j\theta_n}$$ where '$n$' denotes the $n^{th}$ reflecting element on the IRS. Since we focus on CE in this paper, we set the reflection amplitude, $\beta_n = 1$ and assume that the reflection coefficient of all the elements on IRS-1 and IRS-2 is $\phi_1$ and $\phi_2$ respectively.
\subsection{Transmission scheme}
\label{transmission scheme in SISO}
The quasi-static block fading assumption means that the cascaded channels remain constant over a time block and varies from one time block to another. Hence, in this paper, we focus on a single time block, where a part of it is used to estimate the cascaded channels and the remaining time is used for data transmission as shown in Figure-\ref{fig2}. The magnified version of each CE block in Figure-\ref{fig2} is shown in Figure-\ref{fig2_ext}.
\begin{figure}[htb]
	 \centering
    \includegraphics[height = 4cm,width = 6cm]{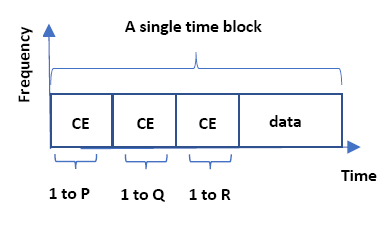}
    \caption{A single time block showing CE and data transmission phase}
    \label{fig2}
\end{figure}
\begin{figure}[htb]
    \centering
    \includegraphics[height = 4cm,width = 6cm]{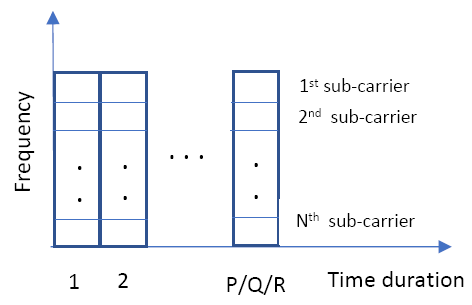}
    \caption{Transmission scheme of OFDM symbols}
    \label{fig2_ext}
\end{figure}
To estimate the MU$\rightarrow$IRS-1$\rightarrow$BS channel,
the MU transmits '$P$' pilot symbols over the $k^{th}$ sub-carrier and each symbol is denoted by  $X_{(1),k}^{(p)}$, where $p\in \{1,2,...,P\}$.
Similarly, the MU transmits $X_{(2),k}^{(q)} $ and $X_{(3),k}^{(r)}$ pilot symbols to estimate the channels 
 MU$\rightarrow$IRS-2$\rightarrow$BS and MU$\rightarrow$IRS-1$\rightarrow$IRS-2$\rightarrow$BS respectively. Here, $q\in \{1,2,...,Q\}$ and $r\in \{1,2,...,R\}$.
\subsection{Signal model}
From the above discussion, consider the case of CE of MU$\rightarrow$IRS-2$\rightarrow$BS channel. We can write the received signal at the BS over the $k^{th}$ sub-carrier as
\begin{equation}
       Y_{(2),k}^{(q)} = X_{(2),k}^{(q)} \phi_2 (\mathbf{H}_{t(2),k}^T \mathbf{H}_{r(2),k})+V_{(2),k}^{(q)}
\end{equation}
\begin{proposition}\label{prop:mean and variance of J}
The $\mathbb{E}\{J_{(1),k}\} = 0$ and $Var\{J_{(1),k}\} = N_1\sigma_1^2\sigma_2^2$.
\end{proposition}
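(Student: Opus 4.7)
The plan is to expand the inner product componentwise and then exploit the independence structure of the CSCG entries. Writing
\[
J_{(1),k} \;=\; \mathbf{H}_{t(1),k}^T \mathbf{H}_{r(1),k} \;=\; \sum_{n=1}^{N_1} H_{t(1),k,n}\, H_{r(1),k,n},
\]
I would first handle the mean by linearity of expectation. Since $H_{t(1),k,n}$ and $H_{r(1),k,n}$ come from physically distinct channel links and are therefore mutually independent (with both having zero mean), each summand factors as $\mathbb{E}\{H_{t(1),k,n}\}\mathbb{E}\{H_{r(1),k,n\}} = 0$, so $\mathbb{E}\{J_{(1),k}\}=0$.

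For the variance, using the standard definition $\mathrm{Var}\{Z\}=\mathbb{E}\{|Z-\mathbb{E}\{Z\}|^2\}$ for complex $Z$, and the fact that the mean is zero, I would compute $\mathbb{E}\{|J_{(1),k}|^2\}$ by expanding the square of the sum into diagonal and cross terms. The cross terms involve $\mathbb{E}\{H_{t(1),k,m}H_{r(1),k,m}\overline{H_{t(1),k,n}H_{r(1),k,n}}\}$ for $m\neq n$; these vanish because the underlying entries are i.i.d.\ across $n$ and each has zero mean, so after factoring by independence at least one factor of the form $\mathbb{E}\{H_{t(1),k,m}\}$ or $\mathbb{E}\{H_{r(1),k,m}\}$ appears. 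Only the $N_1$ diagonal terms survive.

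Each diagonal term equals $\mathbb{E}\{|H_{t(1),k,n}|^2 |H_{r(1),k,n}|^2\}$, which, by independence of the two factors, splits as $\mathbb{E}\{|H_{t(1),k,n}|^2\}\cdot\mathbb{E}\{|H_{r(1),k,n}|^2\} = \sigma_2^2\sigma_1^2$. Summing over $n=1,\dots,N_1$ then yields $\mathrm{Var}\{J_{(1),k}\} = N_1\sigma_1^2\sigma_2^2$, as claimed.

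I do not expect any real obstacle: the argument reduces to careful bookkeeping of independence and of which expectations vanish. The only small subtlety to state cleanly is the convention used for the variance of a complex random variable, and the explicit observation that the two IRS-adjacent links $\mathbf{H}_{t(1),k}$ and $\mathbf{H}_{r(1),k}$ are independent of each other (so both the mean factors and the second-moment factors split as products). Once that is laid out, the computation is a one-line application of linearity and independence.
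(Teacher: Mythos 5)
Your proposal is correct and follows essentially the same route as the paper's appendix proof: expand $J_{(1),k}=\sum_{n=1}^{N_1}H_{t(1),k,n}H_{r(1),k,n}$, use independence and zero means to kill the mean and the cross terms, and note each surviving term contributes $\sigma_1^2\sigma_2^2$ to the variance. The only cosmetic difference is that you compute the complex second moment $\mathbb{E}\{|J_{(1),k}|^2\}$ directly, whereas the paper splits each product into real and imaginary parts and sums the variances of the four real-valued products; both give the same bookkeeping and the same result.
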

\begin{proof}
See Appendix-\ref{appendix:proof of prop 1}
\end{proof}
\section{Computation of estimates, MSE and CRLB  }\label{CE, MSE and CRLB}
 As shown in Figure \ref{fig2}, in the first CE block, the MU$\rightarrow$IRS-1$\rightarrow$BS channel is estimated by switching OFF all the elements of IRS-2 and switching ON all the elements of IRS-1. Similarly, in the second CE block, all the elements of IRS-2 are switched ON and that of IRS-1 are switched OFF. In the third CE block, all the elements of IRS-1 and IRS-2 are switched ON.
\paragraph{Estimation of MU$\rightarrow$IRS-1$\rightarrow$BS channel}
By stacking the pilots sent over $k^{th}$ sub-carrier for '$P$' time slots, we get
\begin{equation}
	 \mathbf{Y}_{(1),k}= \phi_1\mathbf{X}_{(1),k} J_{(1),k} +\mathbf{V}_{(1),k} 
	\label{e10}
\end{equation}
where $\mathbf{X}_{(1),k} \in \mathbb{C}^{P\times 1}$. Then the LMMSE estimate is given by
\begin{equation}
	\hat{J}_{(1),k} = \mathbf{R}_{JY}\mathbf{R}_{YY}^{-1}\mathbf{Y}_{(1),k}
	\label{e11}
\end{equation}
where $\mathbf{R}_{JY} =\dfrac{N_1\sigma_1^2\sigma_2^2}{\phi_1}\mathbf{X}_{(1),k}^H$ is the cross-covariance of $J_{(1),k}$ and $\mathbf{Y}_{(1),k}$. $\mathbf{R}_{YY} = N_1\sigma_1^2\sigma_2^2(\mathbf{X}_{(1),k}\mathbf{X}_{(1),k}^H) + \sigma^2\mathbf{I}$ is the covariance of $\mathbf{Y}_{(1),k}$.\\
From \eqref{e11}, the LMMSE estimate is given by 
\begin{equation}
	\hat{J}_{(1),k} = \dfrac{N_1\sigma_1^2\sigma_2^2}{\phi_1}\mathbf{X}_{(1),k}^H(N_1\sigma_1^2\sigma_2^2(\mathbf{X}_{(1),k}\mathbf{X}_{(1),k}^H) + \sigma^2\mathbf{I})^{-1}\mathbf{Y}_{(1),k}
	\label{e12}
\end{equation}

\paragraph{Estimation of MU$\rightarrow$IRS-2$\rightarrow$BS channel}
The LMMSE estimate is given by
\begin{multline}
	\hat{J}_{(2),k} =\\
	 \dfrac{N_2\sigma_3^2\sigma_4^2}{\phi_2}\mathbf{X}_{(2),k}^H(N_2\sigma_3^2\sigma_4^2(\mathbf{X}_{(2),k}\mathbf{X}_{(2),k}^H) + \sigma^2\mathbf{I})^{-1}\mathbf{Y}_{(2),k}
	\label{e14}
\end{multline}
where $\mathbf{X}_{(2),k}\in \mathbb{C}^{Q\times 1}$. 
\paragraph{Estimation of  MU$\rightarrow$IRS-1$\rightarrow$IRS-2$\rightarrow$BS channel}
The LMMSE estimate is given by
\begin{multline}
		\hat{J}_{(3),k} = \\
		 \dfrac{N_2N_1\sigma_1^2\sigma_4^2\sigma_5^2}{\phi_1 \phi_2}\mathbf{X}_{(3),k}^H(N_2N_1\sigma_1^2\sigma_4^2\sigma_5^2(\mathbf{X}_{(3),k}\mathbf{X}_{(3),k}^H) + \sigma^2\mathbf{I})^{-1} \\
	\mathbf{Y}_{(3),k}
	\label{e16}
\end{multline}
where $\mathbf{X}_{(3),k}\in \mathbb{C}^{R\times 1}$.

\subsection{Mean Squared Error (MSE)}

The MSE for the estimate, MU$\rightarrow$IRS-1$\rightarrow$BS is given by
\begin{equation*}
	MSE_{(1),k} = \mathbf{R}_{JJ}-\mathbf{R}_{JY}\mathbf{R}_{YY}^{-1}\mathbf{R}_{YJ}
\end{equation*}
where $\mathbf{R}_{JJ} = N_1\sigma_1^2\sigma_2^2$ is the covariance of $J_{(1),k}$ and $\mathbf{R}_{YJ} = \mathbf{R}_{JY}^H$. Hence, 
\begin{multline}
	MSE_{(1),k}=N_1\sigma_1^2\sigma_2^2- \\
	((N_1\sigma_1^2\sigma_2^2)^2\mathbf{X}_{(1),k}^H(N_1\sigma_1^2\sigma_2^2(\mathbf{X}_{(1),k}\mathbf{X}_{(1),k}^H) 
	+ \sigma^2\mathbf{I})^{-1} \mathbf{X}_{(1),k})
	\label{mse bayesian 1}
\end{multline}
From \eqref{mse bayesian 1}, we can write
\begin{equation}
	MSE_{(1),k} = \left[\dfrac{(\mathbf{X}_{(1),k}^H\mathbf{X}_{(1),k})}{\sigma^2}+\dfrac{1}{N_1\sigma_1^2\sigma_2^2}\right]^{-1}
	\label{mse bayesian1 final}
\end{equation}
\begin{proposition}\label{prop: simplification of MSE}
The $MSE_{(1),k}$ in \eqref{mse bayesian1 final} can be obtained from \eqref{mse bayesian 1}.
\end{proposition}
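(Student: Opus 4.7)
The plan is to invoke the Woodbury matrix inversion lemma on the middle factor $(N_1\sigma_1^2\sigma_2^2\,\mathbf{X}_{(1),k}\mathbf{X}_{(1),k}^H + \sigma^2\mathbf{I})^{-1}$ that appears in \eqref{mse bayesian 1}. Since this matrix is a rank-one update of a scalar multiple of the identity, its inverse has a closed form that is easy to write down. For brevity set $a := N_1\sigma_1^2\sigma_2^2$ and $\mathbf{x} := \mathbf{X}_{(1),k}$; Woodbury then gives $(a\mathbf{x}\mathbf{x}^H + \sigma^2\mathbf{I})^{-1} = \sigma^{-2}\mathbf{I} - \dfrac{a\,\mathbf{x}\mathbf{x}^H}{\sigma^2(\sigma^2 + a\,\mathbf{x}^H\mathbf{x})}$.

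Equivalently, and perhaps more cleanly, I would observe that $\mathbf{x}$ is itself an eigenvector of $a\mathbf{x}\mathbf{x}^H + \sigma^2\mathbf{I}$ with eigenvalue $\sigma^2 + a\,\mathbf{x}^H\mathbf{x}$, so $(a\mathbf{x}\mathbf{x}^H + \sigma^2\mathbf{I})^{-1}\mathbf{x} = (\sigma^2 + a\,\mathbf{x}^H\mathbf{x})^{-1}\mathbf{x}$. Either route collapses the scalar quadratic form that appears in \eqref{mse bayesian 1} to $\mathbf{x}^H(a\mathbf{x}\mathbf{x}^H + \sigma^2\mathbf{I})^{-1}\mathbf{x} = \dfrac{\mathbf{x}^H\mathbf{x}}{\sigma^2 + a\,\mathbf{x}^H\mathbf{x}}$.

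I would then substitute this back into \eqref{mse bayesian 1} to get $MSE_{(1),k} = a - \dfrac{a^2\,\mathbf{x}^H\mathbf{x}}{\sigma^2 + a\,\mathbf{x}^H\mathbf{x}}$, combine over a common denominator (the $a^2\mathbf{x}^H\mathbf{x}$ cross-terms in the numerator cancel), and obtain $MSE_{(1),k} = \dfrac{a\sigma^2}{\sigma^2 + a\,\mathbf{x}^H\mathbf{x}}$. Taking reciprocals and splitting the fraction yields $MSE_{(1),k}^{-1} = \dfrac{\mathbf{x}^H\mathbf{x}}{\sigma^2} + \dfrac{1}{a}$, which on re-inserting $a = N_1\sigma_1^2\sigma_2^2$ and $\mathbf{x} = \mathbf{X}_{(1),k}$ is exactly \eqref{mse bayesian1 final}.

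There is no real conceptual obstacle; the proposition is a one-line consequence of the matrix inversion lemma applied to a rank-one perturbation of $\sigma^2\mathbf{I}$. The only step that warrants care is the rational-function bookkeeping, in particular verifying the cancellation of the $a^2\mathbf{x}^H\mathbf{x}$ terms when combining over a common denominator. A useful sanity check is to note that the resulting expression is the familiar information-form LMMSE error variance $(\mathbf{R}_{JJ}^{-1} + \mathbf{X}^H(\sigma^2\mathbf{I})^{-1}\mathbf{X})^{-1}$, which reassures that the derivation is on the right track.
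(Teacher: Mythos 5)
Your proposal is correct and is essentially the paper's own argument: your eigenvector observation $(a\mathbf{x}\mathbf{x}^H+\sigma^2\mathbf{I})\mathbf{x}=(\sigma^2+a\,\mathbf{x}^H\mathbf{x})\mathbf{x}$ is precisely the adjoint of the push-through identity $(\sigma_h^2\mathbf{X}^H\mathbf{X}+\sigma^2\mathbf{I})\mathbf{X}^H=\mathbf{X}^H(\sigma_h^2\mathbf{X}\mathbf{X}^H+\sigma^2\mathbf{I})$ that the paper uses to collapse the $P\times P$ inverse to a scalar. The only cosmetic difference is the final bookkeeping (your common-denominator cancellation versus the paper's add-and-subtract of $\sigma^2\mathbf{I}$), and both yield \eqref{mse bayesian1 final}.
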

\begin{proof}
See Appendix-\ref{appendix:proof of prop 2}
\end{proof}
Similarly, the MSE for the estimate, MU$\rightarrow$IRS-2$\rightarrow$BS is
\begin{multline}
	MSE_{(2),k}=N_2\sigma_3^2\sigma_4^2- \\
	((N_2\sigma_3^2\sigma_4^2)^2\mathbf{X}_{(2),k}^H(N_2\sigma_3^2\sigma_4^2(\mathbf{X}_{(2),k}\mathbf{X}_{(2),k}^H) 
	+ \sigma^2\mathbf{I})^{-1} \mathbf{X}_{(2),k})
	\label{mse bayesian 2}
\end{multline}
From \eqref{mse bayesian 2}, we can write
\begin{equation}
	MSE_{(2),k} = \left[\dfrac{(\mathbf{X}_{(2),k}^H\mathbf{X}_{(2),k})}{\sigma^2}+\dfrac{1}{N_2\sigma_3^2\sigma_4^2}\right]^{-1}
	\label{mse bayesian2 final}
\end{equation}
Finally, the MSE for the estimate, MU$\rightarrow$IRS-1$\rightarrow$IRS-2$\rightarrow$BS is
\begin{multline}
	MSE_{(3),k}=N_2N_1\sigma_1^2\sigma_4^2\sigma_5^2- \\
	((N_2N_1\sigma_1^2\sigma_4^2\sigma_5^2)^2\mathbf{X}_{(3),k}^H(N_2N_1\sigma_1^2\sigma_4^2\sigma_5^2(\mathbf{X}_{(3),k}\mathbf{X}_{(3),k}^H) 
	+ \sigma^2\mathbf{I})^{-1} \\
	 \mathbf{X}_{(3),k})
	 \label{mse bayesian 3}
\end{multline}
From \eqref{mse bayesian 3}, we can write
\begin{equation}
	MSE_{(3),k} = \left[\dfrac{(\mathbf{X}_{(3),k}^H\mathbf{X}_{(3),k})}{\sigma^2}+\dfrac{1}{N_2N_1\sigma_1^2\sigma_4^2\sigma_5^2}\right]^{-1}
\end{equation}
\subsection{Cram{\'e}r-Rao lower bound (CRLB)}
Suppose that if there is no prior distribution assigned to the channels then by using the result of CRLB for a complex parameter in \cite{b13}, the CRLB is given by 
\begin{equation}
	CRLB_{(i),k} = \sigma^2 (\mathbf{X}_{(i),k}^H\mathbf{X}_{(i),k})^{-1}
	\label{crlb non bayesian}
\end{equation}
Here, $i\in \{1,2,3\}$ and $k\in\{1,2,..., N\}$.

Since we are considering a Bayesian approach in this paper, from the discussion in \cite[p. 84]{b14} and from \eqref{crlb non bayesian}, we can write the CRLB for the MSE of the estimate, MU$\rightarrow$IRS-i$\rightarrow$BS as
\begin{equation}
	CRLB_{(i),k} = \left[\dfrac{(\mathbf{X}_{(i),k}^H\mathbf{X}_{(i),k})}{\sigma^2}+F_{priori}\right]^{-1}
	\label{crlb bayesian}
\end{equation}
where,
\begin{equation}
 	F_{priori} = \mathbb{E}\left\{\left[\frac{\partial ln(f(J))}{\partial J }\right]^*\left[\frac{\partial ln(f(J))}{\partial J }\right]\right\}
 	\label{priori information}
 \end{equation}
 Here, $f(J)$ is the pdf of the cascaded channel, say $J_{(1),k}$. To calculate $f(J)$, we need to find the pdf of inner product of CSCG random vectors. In \cite{b15}, a joint characteristic function of this problem is derived, which is given as
 \begin{equation}\label{joint cf}
     \psi_{D_1,D_2}(\omega_1,\omega_2) = \dfrac{1}{\left( 1+\sigma_A^2\sigma_B^2\dfrac{\omega_1^2+\omega_2^2}{4}\right)^{S}}
 \end{equation}
 where, $D_1$ and $D_2$ are the real and imaginary parts of $D = \mathbf{A}^H\mathbf{B}$ and $\mathbf{A}\sim\mathcal{CN}(\mathbf{0},\sigma_A^2\mathbf{I})$, $\mathbf{B}\sim\mathcal{CN}(\mathbf{0},\sigma_B^2\mathbf{I})$. Here, $\mathbf{A},\mathbf{B} \in \mathbb{C}^{S\times 1}$. It can be seen from \eqref{joint cf} that it is difficult to find the pdf by applying a inverse fourier transform directly. Hence, by numerical computation, the pdf, $f(J)$ is approximated with Rayleigh pdf. This issue will be further discussed in Section \ref{sim results}. Therefore, 
 \begin{equation}
 F_{priori} = \mathbb{E}\left\{\left|\dfrac{1}{J}-\dfrac{J}{b^2}\right|^2\right\}
\label{numerical priori information}
 \end{equation}
 where $b$ is the scale parameter of Rayleigh pdf and $\dfrac{\partial ln(f(J))}{\partial J} = \left(\dfrac{1}{J}-\dfrac{J}{b^2}\right)$.\\
 
 Now the expectation in \eqref{numerical priori information} can be solved numerically by plugging in the sample values of the inner product ($J$). This numerical computation of the pdf can also be extended to MU$\rightarrow$IRS-1$\rightarrow$IRS-2$\rightarrow$BS channel.
\section{Simulation results}\label{sim results}
The simulations are carried out for the case of MU$\rightarrow$IRS-1$\rightarrow$BS under Bayesian setting. The QPSK symbols are used as pilot symbols in the simulations. Firstly, to compute the Bayesian CRLB, the approximated pdf of $f(J)$ has to be computed to obtain the fisher information. So, we generate $10^5$ samples of the inner product of CSCG random vectors, $\mathbf{H}_{t(1),k}$ and $ \mathbf{H}_{r(1),k}$. We repeat this process for $N_1 = 5$ and $N_1 = 60$ IRS-1 elements. As shown in Figure \ref{fig3}, the pdf is approximated with a Rayleigh distribution having a scale parameter of $b = 7.243$ (estimated). We can observe that this approximated pdf is not a proper fit for the histogram. However, if the number of elements on IRS-1 are increased to $N_1 = 60$, we can observe from Figure \ref{fig4} that the histogram is well approximated with a Rayleigh distribution having $b = 25.1108$ (estimated). This observation is analogous to CLT. After finding out the approximate pdf, the Bayesian CRLB can be computed from \eqref{crlb bayesian}. 
\begin{figure}[!htb]
	\centering
	\includegraphics[height = 5cm, width = 9cm]{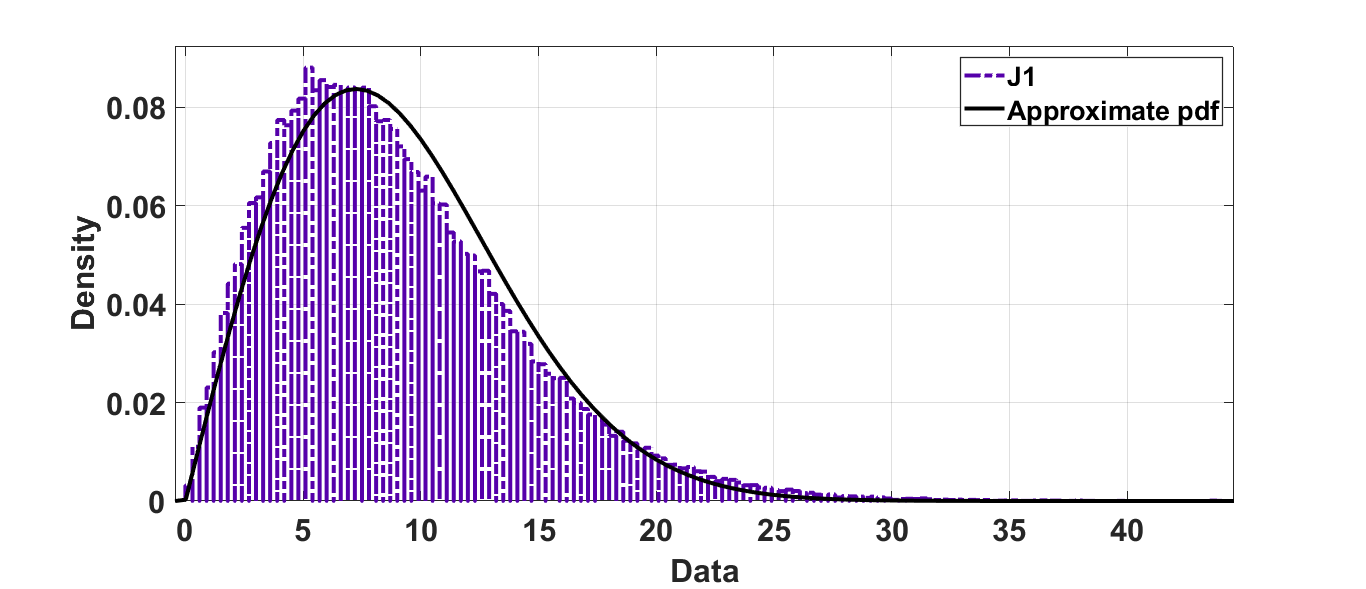}
	\caption{pdf approximation of $J_{(1),k}$ for $N_1=5$}
	\label{fig3}
\end{figure}
\begin{figure}[!htb]
	\centering
	\includegraphics[height = 5cm, width = 9cm]{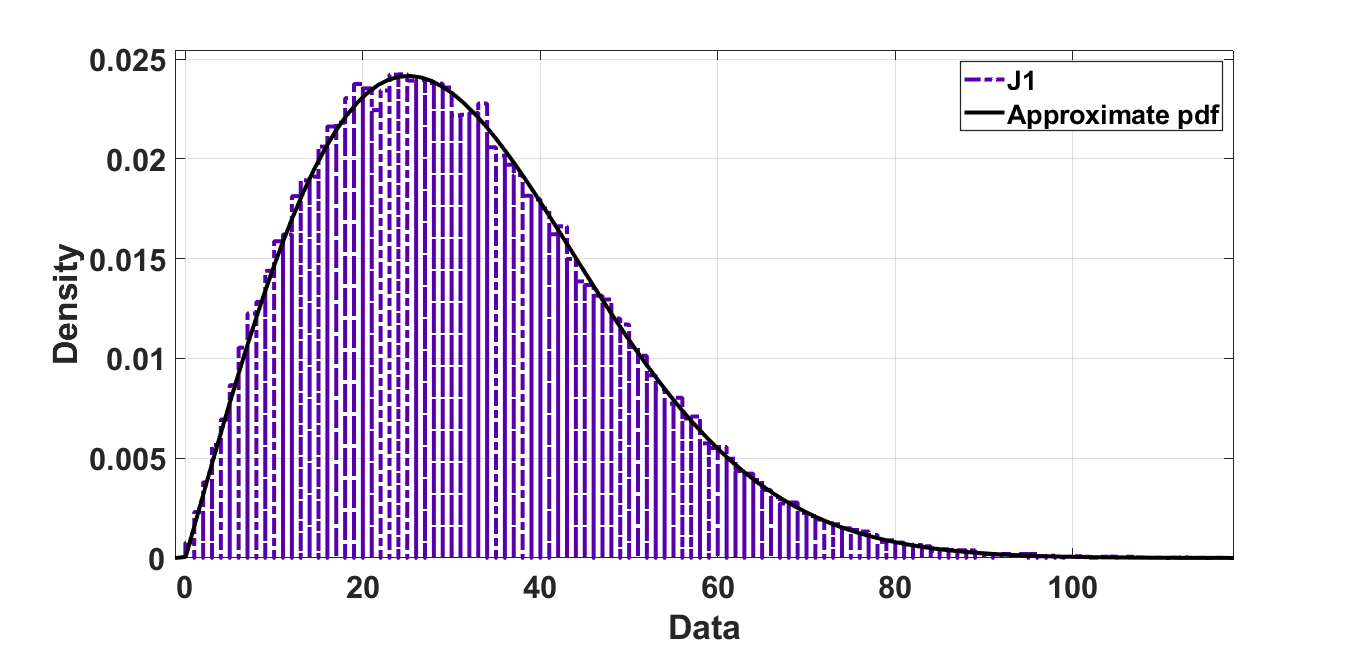}
	\caption{pdf approximation of $J_{(1),k}$ for $N_1=60$}
	\label{fig4}
\end{figure}
 
\begin{figure}[!htb]
	\centering
	\includegraphics[height = 5cm, width = 9cm]{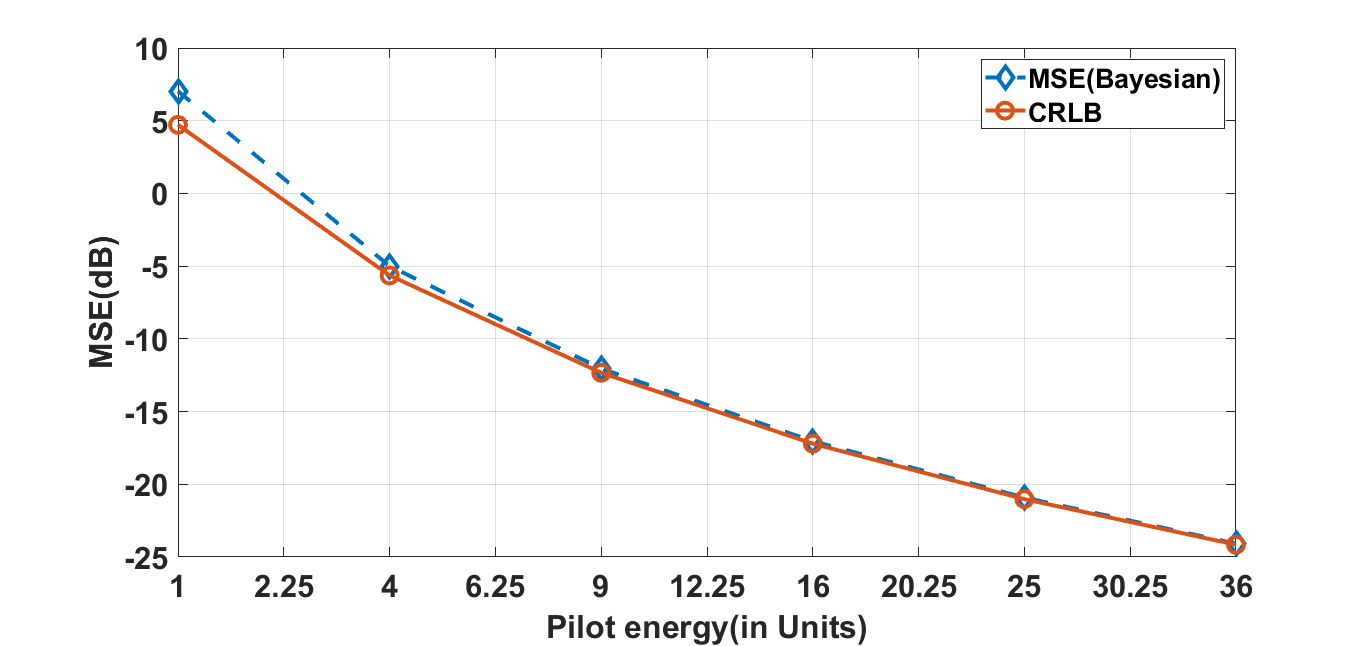}
	\caption{MSE vs Pilot energy for $N_1=60$ in case of Bayesian setting}
	\label{fig5}
\end{figure}
Now we will evaluate the performance of the LMMSE estimator using the Bayesian CRLB which is computed previously. The MSE is used as a performance metric here. It is plotted against the pilot energy in Figure \ref{fig5} by fixing the pilot length to four. Also, the MSE is plotted against pilot length in Figure \ref{fig6} by fixing the pilot energy to unity. The MSE decreases as the pilot energy or pilot length is increased and this result is expected.
\begin{figure}[!htb]
	\centering
	\includegraphics[height = 5cm,width = 9cm]{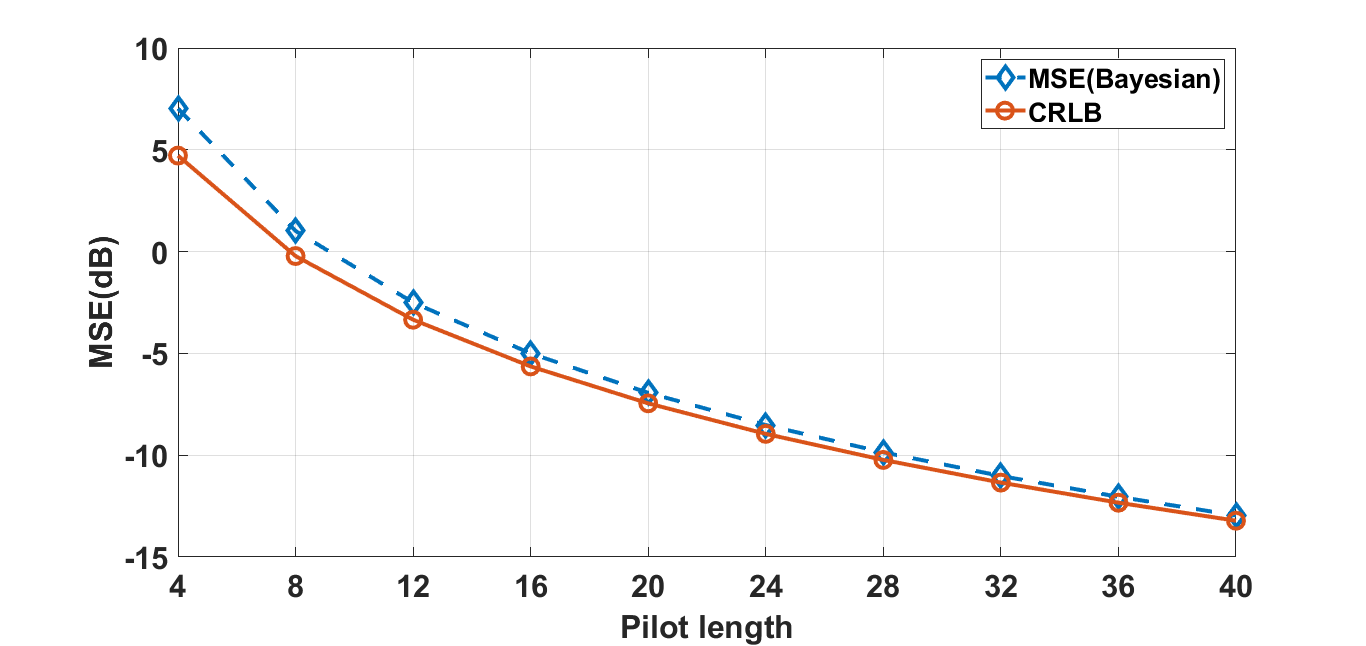}
	\caption{MSE vs Pilot length for $N_1=60$ in case of Bayesian setting}
	\label{fig6}
\end{figure}
 
\section{Conclusion} \label{conclusion}
The cascaded channel estimates are computed in case of double-IRS aided communication with a wideband channel assumption. To evaluate the performance of the estimator (in case of MU$\rightarrow$IRS-1$\rightarrow$BS and MU$\rightarrow$IRS-2$\rightarrow$BS channel links), the Bayesian CRLB is used which was computed numerically. It is shown that computing the CRLB analytically, is difficult because there is no closed form expression of pdf of inner product of CSCG random vectors. Hence, we find the approximate pdf numerically.

\bibliographystyle{IEEEtran} 
\bibliography{references.bib}

% Generated by IEEEtran.bst, version: 1.14 (2015/08/26)
\begin{thebibliography}{10}
\providecommand{\url}[1]{#1}
\csname url@samestyle\endcsname
\providecommand{\newblock}{\relax}
\providecommand{\bibinfo}[2]{#2}
\providecommand{\BIBentrySTDinterwordspacing}{\spaceskip=0pt\relax}
\providecommand{\BIBentryALTinterwordstretchfactor}{4}
\providecommand{\BIBentryALTinterwordspacing}{\spaceskip=\fontdimen2\font plus
\BIBentryALTinterwordstretchfactor\fontdimen3\font minus
  \fontdimen4\font\relax}
\providecommand{\BIBforeignlanguage}[2]{{%
\expandafter\ifx\csname l@#1\endcsname\relax
\typeout{** WARNING: IEEEtran.bst: No hyphenation pattern has been}%
\typeout{** loaded for the language `#1'. Using the pattern for}%
\typeout{** the default language instead.}%
\else
\language=\csname l@#1\endcsname
\fi
#2}}
\providecommand{\BIBdecl}{\relax}
\BIBdecl

\bibitem{b5}
Q.-U.-A. Nadeem, H.~Alwazani, A.~Kammoun, A.~Chaaban, M.~Debbah, and M.-S.
  Alouini, ``Intelligent reflecting surface assisted multi-user miso
  communication: Channel estimation and beamforming design,'' \emph{arXiv
  preprint arXiv:2005.01301}, 2020.

\bibitem{b6}
Z.~Wang, L.~Liu, and S.~Cui, ``Channel estimation for intelligent reflecting
  surface assisted multiuser communications: Framework, algorithms, and
  analysis,'' \emph{IEEE Transactions on Wireless Communications}, vol.~19,
  no.~10, pp. 6607--6620, 2020.

\bibitem{b7}
A.~Taha, M.~Alrabeiah, and A.~Alkhateeb, ``Enabling large intelligent surfaces
  with compressive sensing and deep learning,'' \emph{IEEE Access}, vol.~9, pp.
  44\,304--44\,321, 2021.

\bibitem{b8}
Y.~Yang, B.~Zheng, S.~Zhang, and R.~Zhang, ``Intelligent reflecting surface
  meets ofdm: Protocol design and rate maximization,'' \emph{IEEE Transactions
  on Communications}, vol.~68, no.~7, pp. 4522--4535, 2020.

\bibitem{b9}
B.~Zheng and R.~Zhang, ``Intelligent reflecting surface-enhanced ofdm: Channel
  estimation and reflection optimization,'' \emph{IEEE Wireless Communications
  Letters}, vol.~9, no.~4, pp. 518--522, 2019.

\bibitem{b10}
C.~You, B.~Zheng, and R.~Zhang, ``Wireless communication via double irs:
  Channel estimation and passive beamforming designs,'' \emph{IEEE Wireless
  Communications Letters}, vol.~10, no.~2, pp. 431--435, 2020.

\bibitem{b11}
B.~Zheng, C.~You, and R.~Zhang, ``Uplink channel estimation for double-irs
  assisted multi-user mimo,'' in \emph{ICC 2021-IEEE International Conference
  on Communications}.\hskip 1em plus 0.5em minus 0.4em\relax IEEE, 2021, pp.
  1--6.

\bibitem{b13}
A.~Van~den Bos, ``A cram{\'e}r-rao lower bound for complex parameters,''
  \emph{IEEE Transactions on Signal Processing [see also Acoustics, Speech, and
  Signal Processing, IEEE Transactions on], 42 (10)}, 1994.

\bibitem{b14}
V.~Trees and L.~Harry, \emph{Detection, Estimation, and Modulation Theory-Part
  l-Detection, Estimation, and Linear Modulation Theory}.\hskip 1em plus 0.5em
  minus 0.4em\relax John Wiley \& Sons New York, 2001.

\bibitem{b15}
R.~K. Mallik and N.~C. Sagias, ``Distribution of inner product of complex
  gaussian random vectors and its applications,'' \emph{IEEE transactions on
  communications}, vol.~59, no.~12, pp. 3353--3362.

\end{thebibliography}

\appendices

\section{Proof of Proposition-\ref{prop:mean and variance of J}}\label{appendix:proof of prop 1}
We prove that $J_{(1),k}\sim\mathcal{CN}(0,N_1\sigma_1^2\sigma_2^2)$ where $J_{(1),k} = \mathbf{H}_{t(1),k}^T \mathbf{H}_{r(1),k}$.
Suppose that $\mathbf{a} =  \mathbf{H}_{t(1),k}$ and $\mathbf{b} = \mathbf{H}_{r(1),k}$. Then $J_{(1),k} = \mathbf{a}^T\mathbf{b}$.
We can write $J_{(1),k} = a_1b_1+a_2b_2+...+a_{N_1}b_{N_1}$. Now let $y_1 = a_1b_1$. \\
We can write
\begin{eqnarray}
	y_1 =& (a_{1R}+ja_{1I})(b_{1R}+jb_{1I})\nonumber\\
			=& (a_{1R}b_{1R}-a_{1I}b_{1I})+j(a_{1R}b_{1I}+a_{1I}b_{1R})
			\label{a1}
\end{eqnarray}
From \eqref{a1}, we can write $\mathbb{E}\{y_1\} = \mathbb{E}\{y_{1R}\}+j\mathbb{E}\{y_{1I}\} = 0$ and $Var\{y_1\} = Var\{y_{1R}\}+Var\{y_{1I}\}$.
Consider, 
\begin{eqnarray}
	Var\{a_{1R}b_{1R}\} =&\mathbb{E}\{(a_{1R}b_{1R})^2\}\nonumber\\
														=&Var\{a_{1R}\}Var\{b_{1R}\}\nonumber\\
													=& \dfrac{\sigma_2^2\sigma_1^2}{4}
													\label{a2}
\end{eqnarray}
Hence, from \eqref{a2}, we can write
\begin{eqnarray}
	Var\{y_1\} &=& Var\{a_{1R}b_{1R}\}+Var\{a_{1I}b_{1I}\}+ \nonumber\\
	&&Var\{a_{1R}b_{1I}\}+Var\{a_{1I}b_{1R}\} \nonumber \\
	&=& \sigma_1^2\sigma_2^2
	\label{a3}
\end{eqnarray}
From \eqref{a3}, we can write $Var\{J_{(1),k}\} = N_1\sigma_1^2\sigma_2^2$.\\
Hence, $\mathbb{E}\{J_{(1),k}\} = 0$ and $Var\{J_{(1),k}\} = N_1\sigma_1^2\sigma_2^2$.

\section{Proof of Proposition-\ref{prop: simplification of MSE}}\label{appendix:proof of prop 2}
We show the simplification of $MSE_{(1),k}$. Let $\sigma_h^2 = N_1\sigma_1^2\sigma_2^2$. We ignore the sub-scripts to show the simplification. Consider,
\begin{equation}
	\sigma_h^2\mathbf{X}^H\mathbf{X}\mathbf{X}^H+\sigma^2\mathbf{X}^H
	\label{a12}
\end{equation}
From \eqref{a12}, we can write 
\begin{multline}
	(\sigma_h^2\mathbf{X}^H\mathbf{X}+\sigma^2\mathbf{I})\mathbf{X}^H =
	\mathbf{X}^H(\sigma_h^2\mathbf{X}\mathbf{X}^H+\sigma^2\mathbf{I})
	\label{a13}
\end{multline}
Here, the LHS and RHS in \eqref{a13} originate from \eqref{a12} itself.
By re-arranging the terms in \eqref{a13}, we can write
\begin{equation}
	\mathbf{X}^H(\sigma_h^2\mathbf{X}\mathbf{X}^H+\sigma^2\mathbf{I})^{-1} = 
	(\sigma_h^2\mathbf{X}^H\mathbf{X}+\sigma^2\mathbf{I})^{-1}\mathbf{X}^H
	\label{a14}
\end{equation}
Hence, by substituting \eqref{a14} in \eqref{mse bayesian 1}, we get
\begin{multline*}
	MSE_{(1),k}=N_1\sigma_1^2\sigma_2^2- \\
	((N_1\sigma_1^2\sigma_2^2)^2(N_1\sigma_1^2\sigma_2^2(\mathbf{X}_{(1),k}^H\mathbf{X}_{(1),k}) 
	+ \sigma^2\mathbf{I})^{-1}\mathbf{X}_{(1),k}^H \mathbf{X}_{(1),k})
\end{multline*} 
By adding and subtracting, $\sigma^2\mathbf{I}$, we get
\begin{multline*}
	MSE_{(1),k}=N_1\sigma_1^2\sigma_2^2- \\
	((N_1\sigma_1^2\sigma_2^2)(N_1\sigma_1^2\sigma_2^2(\mathbf{X}_{(1),k}^H\mathbf{X}_{(1),k}) 
	+ \sigma^2\mathbf{I})^{-1}\\ (N_1\sigma_1^2\sigma_2^2(\mathbf{X}_{(1),k}^H\mathbf{X}_{(1),k})
	+\sigma^2\mathbf{I}-\sigma^2\mathbf{I}))
\end{multline*}
Hence,
$$	MSE_{(1),k} = \left[\dfrac{(\mathbf{X}_{(1),k}^H\mathbf{X}_{(1),k})}{\sigma^2}+\dfrac{1}{N_1\sigma_1^2\sigma_2^2}\right]^{-1}$$

\end{document}